\documentclass[journal,10pt,twocolumn,twoside]{IEEEtran}
\IEEEoverridecommandlockouts
\usepackage{cite}
\usepackage{amsmath,amssymb,amsfonts}
\usepackage{algorithmic}
\usepackage{graphicx}
\usepackage{textcomp}
\usepackage{xcolor}
\usepackage{microtype}
\usepackage{booktabs}
\usepackage{tabularx}
\usepackage{amsthm}
\newtheorem{theorem}{Theorem}

\def\BibTeX{{\rm B\kern-.05em{\sc i\kern-.025em b}\kern-.08em
    T\kern-.1667em\lower.7ex\hbox{E}\kern-.125emX}}
\begin{document}

\title{Beyond Byzantine: An Organizational Consensus Algorithm for Self-Interested Agents Under Information Asymmetry}

\author{\IEEEauthorblockN{1\textsuperscript{st} Jiawei Zhang}
\IEEEauthorblockA{\textit{Department of Agricultural Economics} \\
\textit{Purdue University}\\
West Lafayette, USA \\
zhan6005@purdue.edu} \\
\and
\IEEEauthorblockN{2\textsuperscript{nd} Jianbo Liu}
\IEEEauthorblockA{
  \textit{Basic Technology Group, Intelligent Terminal Business Dept.} \\
  \textit{JD Logistics, JD.com}\\
  Beijing, China \\
  liujianbo15@jd.com} 
}

\maketitle

\begin{abstract}
Traditional distributed consensus protocols force nodes into a false binary: honest-but-faulty or actively malicious (Byzantine). Real organizational departments rarely fit such extreme categories. Instead, departmental agents act with bounded rationality, pursue localized self-interest, and operate under severe information asymmetry. We introduce the Organizational Consensus Algorithm (OCA), a mechanism design framework tailored for internal negotiation. OCA models inter-departmental conflict as a dynamic game of incomplete information, combining asset staking, exception-triggered signaling, and confidence-weighted consensus rules. Rather than enforcing instantaneous total ordering, OCA uses a retrospective penalty system anchored in delayed, verifiable outcomes to curb structural bias and eliminate redundant coordination. Evaluating OCA through a Python simulation prototype across diverse organizational scales reveals lower coordination overhead, richer informative reporting, and bounded welfare loss in noisy environments. These empirical gains remain conditional on our simulation parameters and do not, on their own, constitute a general truthful equilibrium.
\end{abstract}

\begin{IEEEkeywords}
Organizational consensus, mechanism design, incomplete information game, strategic agents, coordination overhead, delayed verification.
\end{IEEEkeywords}

\section{Introduction}

\begin{figure*}[t!]
    \centering
    \includegraphics[width=0.9\linewidth]{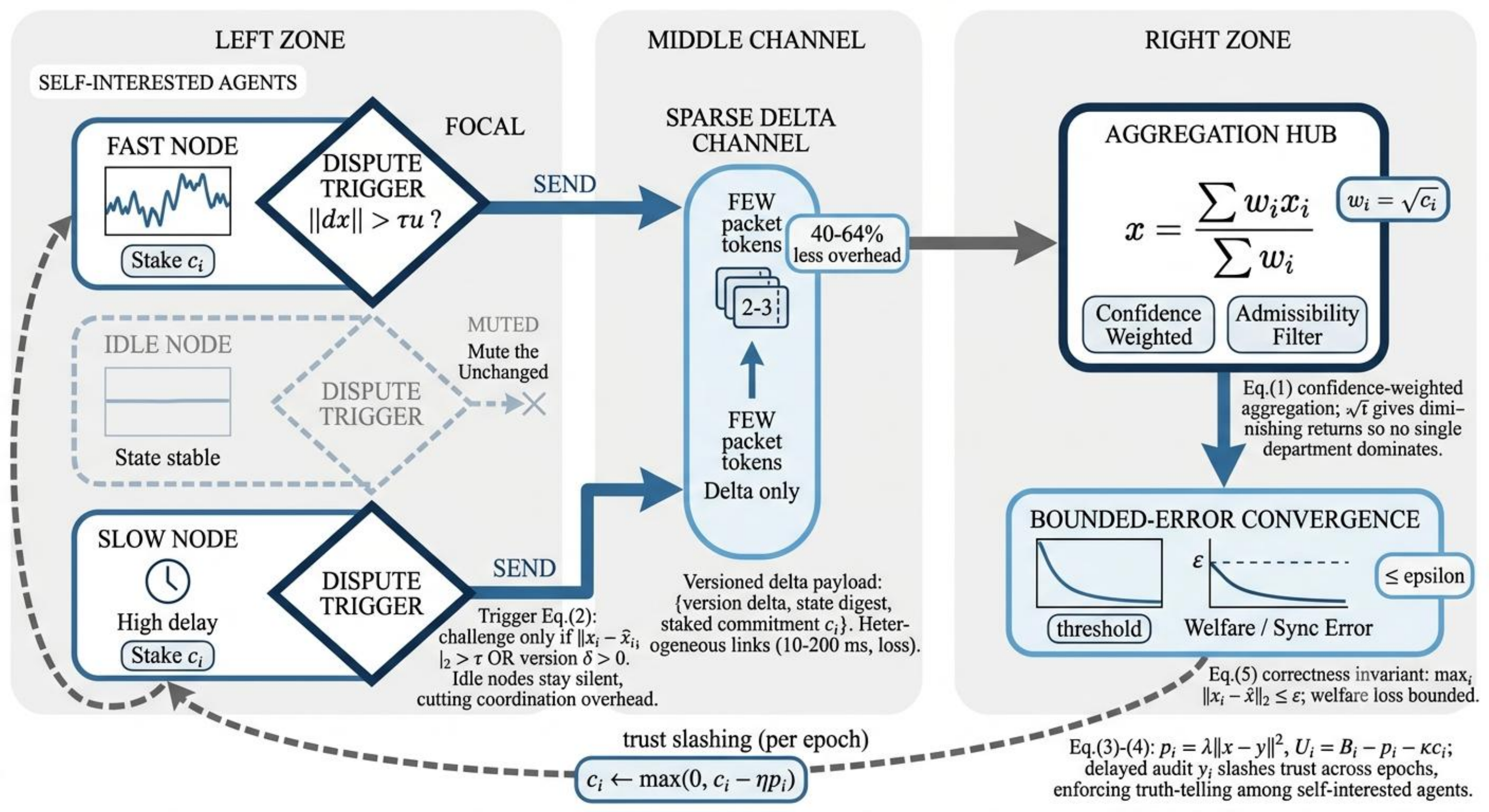}
    \caption{System architecture of the Organizational Consensus Algorithm (OCA). \textbf{(Left Zone)} Departmental agents evaluate local state deviations against a dispute trigger ($\|\Delta x\| > \tau$). Idle nodes with stable states are muted to eliminate redundant coordination overhead, while fast/slow nodes experiencing material changes send updates. \textbf{(Middle Channel)} Only sparse, versioned deltas flow through the network, significantly reducing bandwidth and meeting costs. \textbf{(Right Zone)} The aggregation hub computes a bounded-error consensus state utilizing confidence-weighted aggregation ($w_i = \sqrt{c_i}$). \textbf{(Bottom Feedback Loop)} A delayed organizational oracle $y_i$ (e.g., quarterly audits) triggers a retrospective residual penalty $p_i$. This slashes the effective trust stake $c_i$ of structurally biased agents across macroscopic epochs, enforcing truth-telling and aligning local incentives with global welfare.}
    \label{fig:oca_architecture}
\end{figure*}

Distributed system design traditionally ignores information economics. Classical replicated state machine (RSM) protocols, such as Paxos and Raft \cite{Lam98,Ong14}, treat nodes as either mechanistically obedient or arbitrarily malicious. In modern organizations, however, departmental agents fit neither extreme. They act as boundedly rational, utility-maximizing economic actors under severe information asymmetry. Imposing a uniform consensus policy across such heterogeneous structures chokes organizational efficiency. Periodic, mandatory all-hands meetings squander resources during quiet periods, while centralized leader-centric decision-making causes latency and single points of failure.

Recent systems research treats node heterogeneity as a protocol variable rather than a deployment detail. LowPaxos adapts leaders using compute and network capability profiles in resource-constrained environments \cite{Mwo24}. Hamava supports fault-tolerant reconfiguration across heterogeneous geo-replicated clusters \cite{Man24}, and Avicenna masks fail-slow replicas through counterfactual evaluation and rapid leader rotation \cite{Hod26}. Although these systems optimize performance and fault tolerance, they fail to model nodes that strategically misreport private information. OCA bridges this precise gap.

This paper proposes the Organizational Consensus Algorithm (OCA), a dispute-triggered mechanism design framework for self-interested agents. Just as efficient corporate teams skip unnecessary meetings by communicating only when material facts change, OCA models departments as strategic agents evaluating both the necessity and risk of state change. Agents observe local conditions and broadcast updates only when deviations breach a dynamic local threshold backed by reputational stake.

OCA offers four principal contributions:
\begin{itemize}
    \item A game-theoretic consensus architecture that suppresses redundant coordination overhead under asymmetric information.
    \item A confidence-weighted aggregation protocol that enforces truth-telling through retrospective penalties and delayed verification.
    \item A formal identifiability condition for the delayed reference signal.
    \item Quantitative evidence from a Python prototype demonstrating the trade-offs among coordination savings, consensus stability, and systemic welfare.
\end{itemize}

We define the scope of OCA carefully. OCA targets numerical or mergeable states with bounded error bounds and admissibility rules. It does not replace Paxos or Byzantine fault-tolerant replication when applications demand total ordering and strict linearizability.

\section{Related Work and Positioning}

\subsection{Consensus and State Replication}
Paxos and Raft enforce deterministic total ordering over command logs \cite{Lam98,Ong14}. Although both support arbitrary deterministic state machines, their coordination paths require leader elections and quorum acknowledgments that target physical crash faults rather than rational information manipulation. OCA does not replace these protocols. Weighted averaging fails to substitute for total ordering when updates conflict or depend on history. Instead, OCA restricts its state domain to proposals that nodes can safely validate and merge.

\subsection{Communication-Efficient Knowledge Merging}
Conflict-free Replicated Data Types (CRDTs) leverage algebraic properties to achieve uncoordinated convergence \cite{Sha11}. Delta-mutations \cite{Alm14,Ene19} and digest-driven reconciliation \cite{Baq25} further shrink transfer payloads, while practical rateless set reconciliation eliminates prior knowledge of state differences to cut communication overhead \cite{Yan24}. These innovations directly shape OCA's bandwidth strategy, where proposals carry only versioned deltas once local states cross specific thresholds. Adaptive event-triggered estimation dynamically tunes communication to node-specific budgets \cite{Sel25}. OCA extends this idea by transforming triggers from mere resource-saving rules into strategic gateways that govern when an agent enters the proposal and penalty arena.

\subsection{Organizational and Node Heterogeneity}
Heterogeneity heavily dictates protocol scalability. LowPaxos \cite{Mwo24}, Hamava \cite{Man24}, and Avicenna \cite{Hod26} prove that node capability, network delay, and replica availability belong inside protocol design. OCA applies this insight to strategic decision-making. A department's political weight or historical reliability becomes a core protocol variable. Unlike pure replication systems, OCA lets nodes strategically optimize their reported information.

\subsection{Strategic and Incentivized Synchronization}
Mechanism design aligns local agent incentives with global consistency \cite{Bau06,Bei12}. While strategic sensor fusion models misreporting costs \cite{Che19b}, recent advances offer closer precedents. Teranishi et al. apply mechanism design to incentivize average consensus while protecting private states \cite{Ter25}. Milionis et al. prove that truthful recovery demands source identifiability, showing that without distinguishable observer signal distributions, truthfulness cannot form a strict Bayesian Nash equilibrium \cite{Mil25}. Reward-penalty ratios fail as universal guarantees \cite{Che26,Che26b}. Therefore, OCA treats its quadratic residual penalty as conditionally incentive-compatible only under strictly bounded signals and feasibility constraints.

\section{System Model and Protocol}

\subsection{Agent Model and Local Aggregation}
Consider an organization of $N$ heterogeneous departmental agents $V$ connected over a communication topology $G_t = (V, E_t)$. Instead of relying on a centralized executive coordinator, each agent $j$ maintains its local proposed state (e.g., resource allocation or numerical target) $x_j^t \in \mathbb{R}^d$, a version vector, and a reputation descriptor $r_j$. Upon receiving admissible proposals $\hat{x}_i^t$ from its interacting peers $P_{j,t} = \{i \in V \mid (i,j) \in E_t\}$, agent $j$ computes its locally aggregated consensus state as:
\begin{equation}
x_j^{t+1} = \frac{\sum_{i \in P_{j,t}} w_i \hat{x}_i^t}
{\sum_{i \in P_{j,t}} w_i}, \qquad w_i = \sqrt{c_i},
\label{eq:weighted-aggregation}
\end{equation}
where $c_i \ge 0$ measures agent $i$'s historical commitment or trust score. In a corporate environment, agents do not stake cryptographic tokens; instead, they stake tangible organizational assets. A department pledges its future budget allocations as collateral, risks its bonus pools via KPI deductions, or expends finite internal political capital. We select the square root function to model diminishing marginal returns on political expenditure. While any strictly concave function captures the economic intuition that hoarding power yields decreasing systemic influence, the square root provides a computationally tractable gradient for resource allocation. It ensures that a dominant department cannot unilaterally dictate the consensus merely by out-staking minority peers, thereby preserving organizational diversity in the aggregated state.

\subsection{Exception-Triggered Dispute Mechanism}
To minimize organizational coordination overhead, negotiation messages are broadcast only when a local deviation satisfies a dispute threshold:
\begin{equation}
\|x_i^t - x_i^{\mathrm{last}}\|_2 > \tau_i
\quad\text{or}\quad v_i^t - v_i^{\mathrm{last}} > 0,
\label{eq:event-trigger}
\end{equation}
where $\tau_i$ is the department's local tolerance threshold. Active messages carry agent IDs, proposal deltas, and staked commitments $c_i$.

\subsection{Incomplete Information and Delayed Penalty Model}
To penalize inaccurate or structurally biased proposals driven by local self-interest, a retrospective residual penalty $p_i$ is assigned based on an episodic reference signal $y_i$. In organizational contexts, $y_i$ represents a delayed objective ground truth, such as quarterly audits or realized market data.
\begin{equation}
p_i = \lambda \|\hat{x}_i - y_i\|_2^2, \qquad
U_i = B_i(\hat{x}_i) - p_i - \kappa c_i,
\label{eq:utility}
\end{equation}
where $B_i(\cdot)$ represents the agent's local utility, assumed to be monotonically increasing and concave. In economic terms, this function models a department's incentive to capture localized rents. A logistics division, for instance, might overstate its capacity needs to secure priority routing, hoard operational budget, or pad inventory safety stock. Such actions maximize departmental metrics but degrade system-wide efficiency. 

The term $\kappa c_i$ introduces the endogenous cost of staking. Here, $\kappa > 0$ represents the explicit administrative and reputational friction incurred when a department mobilizes its political capital. Initiating a dispute or demanding a larger budget allocation carries a real opportunity cost, deterring frivolous or purely rent-seeking proposals. 

Because the organizational truth $y_i$ is only available episodically, $p_i$ is applied retrospectively. An agent's effective reputation $c_i$ is slashed across macroscopic epochs $k$ via:
\begin{equation}
c_i^{(k+1)} = \max \left( 0, c_i^{(k)} - \eta p_i^{(k)} \right),
\label{eq:trust-decay}
\end{equation}
where $\eta > 0$ is a decay rate. By structurally linking penalties to this delayed oracle, OCA dynamically isolates persistently noisy or dishonest agents without stalling the rapid, round-by-round exception-triggered negotiation in \eqref{eq:weighted-aggregation}.

\section{Correctness and Performance Metrics}
Consensus error across the organization is bounded by $\varepsilon$:
\begin{equation}
\max_{i \in V} \|x_i^t - x^{\star, t}\|_2 \leq \varepsilon,
\label{eq:correctness}
\end{equation}
where $x^{\star, t}$ represents the optimal, aligned organizational state. Relative coordination overhead savings $S_B$ against baseline mandatory meetings $B_{\mathrm{base}}$ are evaluated via:
\begin{equation}
S_B = 1 - \frac{B_{\mathrm{OCA}}}{B_{\mathrm{base}}}.
\label{eq:savings}
\end{equation}
Alignment latency $T_{\varepsilon}$ and correctness score $C_{\varepsilon}$ (reflecting system welfare and stability) are formally defined as:
\begin{equation}
T_{\varepsilon} = \min \left\{ t :
\max_i \|x_i^t - x^{\star, t}\|_2 \leq \varepsilon \right\},
\label{eq:latency}
\end{equation}
\begin{equation}
C_{\varepsilon} = \frac{1}{T} \sum_{t=1}^{T}
\mathbf{1} \left[ \max_i \|x_i^t - x^{\star, t}\|_2
\leq \varepsilon \right].
\label{eq:correctness-score}
\end{equation}

\section{Theoretical Analysis}
\label{sec:theory}

To formalize the mechanism design guarantees of OCA, we provide a theoretical analysis of the agent incentive structures and the global consensus convergence. For analytical tractability, we analyze the strategic interactions over both a single epoch (static game) and across macroscopic epochs (dynamic Markov game).

\subsection{Static Game and $\varepsilon$-Truthful Equilibrium}

Within a single epoch, consider a rational agent $i$ observing a true local state $x_i^* \in \mathbb{R}^d$. Driven by self-interest, the agent submits a proposed state $\hat{x}_i = x_i^* + \beta_i$, where $\beta_i$ denotes the strategic bias. The ex-post oracle signal is modeled as $y_i = x_i^* + \varepsilon_i$, where the observation noise follows an independent Gaussian distribution $\varepsilon_i \sim \mathcal{N}(0, \sigma_y^2 \mathbf{I})$.

The expected single-epoch utility of agent $i$ is defined as:
\begin{equation}
\mathbb{E}[U_i(\beta_i)] = B_i(x_i^* + \beta_i) - \lambda \mathbb{E}\left[ \|\hat{x}_i - y_i\|_2^2 \right] - \kappa c_i,
\label{eq:expected_utility}
\end{equation}
where $B_i(\cdot)$ represents the local rent-seeking benefit.

\begin{theorem}[$\varepsilon$-Truthful Bayesian Nash Equilibrium]
\label{thm:static_bne}
Assuming $B_i$ is differentiable near $x_i^*$, if the penalty parameter $\lambda$ satisfies $\lambda \ge \frac{\|\nabla B_i(x_i^*)\|_2}{2\varepsilon_{tol}}$, the optimal strategic bias $\|\beta_i^*\|_2$ is strictly bounded by the tolerance $\varepsilon_{tol}$, establishing an $\varepsilon$-Truthful Bayesian Nash Equilibrium.
\end{theorem}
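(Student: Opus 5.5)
The plan is to reduce the expected utility \eqref{eq:expected_utility} to a deterministic function of the bias $\beta_i$ and then bound its maximizer with a first-order argument.

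\textbf{Step 1: compute the penalty in closed form.} Substitute $\hat{x}_i = x_i^* + \beta_i$ and $y_i = x_i^* + \varepsilon_i$. The residual is then $\beta_i - \varepsilon_i$. Because $\varepsilon_i$ has zero mean and covariance $\sigma_y^2\mathbf{I}$, the cross term vanishes and
\begin{equation*}
\mathbb{E}\left[\|\hat{x}_i - y_i\|_2^2\right] = \|\beta_i\|_2^2 + d\,\sigma_y^2 .
\end{equation*}
The objective therefore becomes
\begin{equation*}
f(\beta_i) = B_i(x_i^* + \beta_i) - \lambda\|\beta_i\|_2^2 - \lambda d\sigma_y^2 - \kappa c_i .
\end{equation*}
The last two terms do not depend on $\beta_i$. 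Since $B_i$ is concave, $f$ is strictly concave, so any stationary point is the unique global maximizer.

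\textbf{Step 2: use the first-order condition.} At the maximizer, $\nabla B_i(x_i^*+\beta_i^*) = 2\lambda\beta_i^*$, which gives $\|\beta_i^*\|_2 = \|\nabla B_i(x_i^*+\beta_i^*)\|_2/(2\lambda)$. The gradient is evaluated at the shifted point, not at $x_i^*$. To get back to $x_i^*$, use monotonicity of the gradient of a concave function:
\begin{equation*}
\langle \nabla B_i(x_i^*+\beta_i^*) - \nabla B_i(x_i^*),\, \beta_i^*\rangle \le 0 .
\end{equation*}
Substituting the first-order condition yields $2\lambda\|\beta_i^*\|_2^2 \le \langle \nabla B_i(x_i^*),\beta_i^*\rangle \le \|\nabla B_i(x_i^*)\|_2\|\beta_i^*\|_2$. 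Hence $\|\beta_i^*\|_2 \le \|\nabla B_i(x_i^*)\|_2/(2\lambda)$. Under the stated condition on $\lambda$, the right-hand side is at most $\varepsilon_{tol}$.

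\textbf{Step 3: the main obstacle is the word ``strictly'' and the non-smooth case.}
\begin{itemize}
\item Step 2 gives $\le \varepsilon_{tol}$, not $<$. A strict bound needs either strict inequality in the condition on $\lambda$, or strict concavity of $B_i$ so that the monotonicity inequality is strict whenever $\beta_i^*\neq 0$. The case $\beta_i^*=0$ is trivial. I would state explicitly which of these is used.
\item If differentiability holds only near $x_i^*$, I would replace the first-order condition with supergradients. Concavity guarantees $B_i(x_i^*+\beta)\le B_i(x_i^*)+\langle\nabla B_i(x_i^*),\beta\rangle$ globally. So any $\beta$ with $\|\beta\|_2>\|\nabla B_i(x_i^*)\|_2/(2\lambda)$ has $f(\beta)<f(0)$, which already rules out large deviations directly.
\end{itemize}

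\textbf{Step 4: the equilibrium interpretation.} Agent $i$'s expected utility involves only its own report and its own independent oracle noise. The best response is therefore independent of the other agents' strategies, so the profile of such best responses is a Bayesian Nash equilibrium. Since every agent's bias is within $\varepsilon_{tol}$, this profile is $\varepsilon$-truthful in the sense of the statement.
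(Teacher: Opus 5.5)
Your Steps 1, 2 and 4 are correct, and Step 2 takes a different and more rigorous route than the paper. The paper writes the same first-order condition, then replaces $\nabla B_i(x_i^*+\beta_i)$ by $\nabla B_i(x_i^*)$ via a first-order Taylor approximation. This gives the explicit but only approximate maximizer $\beta_i^*\approx\nabla B_i(x_i^*)/(2\lambda)$, after which the paper imposes $\|\beta_i^*\|_2\le\varepsilon_{tol}$. That buys a closed-form bias, which the paper reuses in its rational-adversary experiments. However, the argument is valid only to first order in $\beta_i$. The Taylor step is also exactly where concavity of $B_i$ (a standing assumption of the system model, not stated in the theorem) is silently needed; without it the claim fails. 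Your gradient-monotonicity argument makes that dependence explicit. It turns the heuristic equality into the exact, global inequality $\|\beta_i^*\|_2\le\|\nabla B_i(x_i^*)\|_2/(2\lambda)$, which is tight (equality for linear $B_i$). Step 4 supplies the equilibrium argument the paper leaves implicit. You are also right that both arguments give only $\le\varepsilon_{tol}$, not a strict bound.

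One claim in Step 3 is wrong. It matters because it is your fallback for the theorem's literal hypothesis (differentiability only near $x_i^*$), while Step 2 differentiates at $x_i^*+\beta_i^*$. The supergradient inequality gives $f(\beta)-f(0)\le\|\beta\|_2\bigl(\|\nabla B_i(x_i^*)\|_2-\lambda\|\beta\|_2\bigr)$. This is negative only when $\|\beta\|_2>\|\nabla B_i(x_i^*)\|_2/\lambda$, not $/(2\lambda)$. For a counterexample, take linear $B_i(x)=\langle g,x\rangle$: every $\beta=s\,g/\|g\|_2$ with $\|g\|_2/(2\lambda)<s<\|g\|_2/\lambda$ has $f(\beta)>f(0)$. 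So comparing with the truthful report yields only $\|\beta_i^*\|_2\le 2\varepsilon_{tol}$.

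The fix is to compare $\beta_i^*$ with $t\beta_i^*$ rather than with $0$:
\begin{itemize}
\item Optimality gives $f(\beta_i^*)\ge f(t\beta_i^*)$.
\item Concavity on the segment gives $B_i(x_i^*+t\beta_i^*)\ge(1-t)B_i(x_i^*)+tB_i(x_i^*+\beta_i^*)$.
\item Combining these, dividing by $1-t$ and letting $t\uparrow 1$ gives $B_i(x_i^*+\beta_i^*)-B_i(x_i^*)\ge 2\lambda\|\beta_i^*\|_2^2$.
\item The supergradient inequality at $x_i^*$ then yields the sharp bound, using differentiability at $x_i^*$ only.
\end{itemize}
Equivalently, use the optimality condition $2\lambda\beta_i^*\in\partial B_i(x_i^*+\beta_i^*)$ together with monotonicity of the superdifferential. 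The same supergradient bound also shows $f(\beta)\to-\infty$ as $\|\beta\|_2\to\infty$. Together with continuity of the finite concave $B_i$, this gives existence of the maximizer, which Step 2 takes for granted.
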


\begin{IEEEproof}
Expanding the expected penalty term in \eqref{eq:expected_utility} yields:
\begin{equation}
\mathbb{E}\left[ \|\beta_i - \varepsilon_i\|_2^2 \right] = \|\beta_i\|_2^2 + d \cdot \sigma_y^2.
\end{equation}
The agent seeks to maximize $B_i(x_i^* + \beta_i) - \lambda \|\beta_i\|_2^2$. Taking the first-order condition (FOC) with respect to $\beta_i$ and equating it to zero:
\begin{equation}
\nabla B_i(x_i^* + \beta_i) - 2\lambda \beta_i = 0.
\end{equation}
Applying a first-order Taylor approximation $\nabla B_i(x_i^* + \beta_i) \approx \nabla B_i(x_i^*)$, we obtain the optimal strategy $\beta_i^* = \frac{1}{2\lambda} \nabla B_i(x_i^*)$. To bound the bias within $\varepsilon_{tol}$, we require $\|\beta_i^*\|_2 \le \varepsilon_{tol}$, which simplifies to $\lambda \ge \frac{\|\nabla B_i(x_i^*)\|_2}{2\varepsilon_{tol}}$.
\end{IEEEproof}

\subsection{Dynamic Game and Rug Pull Deterrence}

A static equilibrium is vulnerable to cross-epoch exploitation, where an agent remains honest to accumulate a maximum trust score $c_{max}$, and subsequently executing a one-time extreme falsification (a ``rug pull'' attack) to capture a massive short-term rent $\Delta B_{huge}$. We model this as a Markov Decision Process with a discount factor $\delta \in (0,1)$.

\begin{theorem}[Rug Pull Deterrence Condition]
\label{thm:rug_pull}
A Subgame Perfect Nash Equilibrium (SPNE) where agents remain perpetually honest is guaranteed if the penalty coefficient $\lambda$ and discount factor $\delta$ satisfy:
\begin{equation}
\lambda \|\beta_{max}\|_2^2 + \frac{\delta}{1 - \delta} \Delta U_{cost} \ge \Delta B_{huge},
\end{equation}
where $\Delta U_{cost} = U_{baseline} - U_{punish} - \kappa c_{max}$ represents the long-term utility loss during the zero-trust punishment phase.
\end{theorem}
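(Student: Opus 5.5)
We prove Theorem~\ref{thm:rug_pull} with the one-shot deviation principle for discounted Markov games, modeling the punishment phase as a grim-trigger strategy profile. The state is the agent's trust score $c_i^{(k)}$, updated by \eqref{eq:trust-decay}. The candidate SPNE is the following: every agent plays $\beta_i=0$ (or the $\varepsilon$-truthful bias of Theorem~\ref{thm:static_bne}) in every epoch. If the oracle ever reveals a residual exceeding tolerance, the deviator's stake is slashed to $c_i=0$ and it remains in the zero-trust punishment phase forever. In that phase its proposals carry weight $w_i=\sqrt{c_i}=0$ in \eqref{eq:weighted-aggregation}, and it earns per-epoch utility $U_{punish}$. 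Because payoffs are bounded and $\delta<1$, the game is continuous at infinity. Checking that no single-epoch deviation from any history is profitable therefore suffices to establish subgame perfection.

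First, we evaluate the on-path continuation value. An honest agent holding $c_{max}$ earns $U_{baseline}-\kappa c_{max}$ per epoch, so the value from the next epoch onward is $\frac{\delta}{1-\delta}(U_{baseline}-\kappa c_{max})$. Second, we evaluate the most profitable one-shot deviation from the state $c_{max}$, which is the rug pull with bias $\beta_{max}$. Its immediate gain over honest play is $\Delta B_{huge}$. Its immediate cost is the realized penalty $\lambda\|\beta_{max}\|_2^2$, computed in expectation exactly as in the proof of Theorem~\ref{thm:static_bne}: the $d\sigma_y^2$ noise term appears on both sides and cancels. Its continuation value is $\frac{\delta}{1-\delta}U_{punish}$. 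Subtracting the two paths shows the deviation is unprofitable if and only if
\begin{equation*}
\lambda\|\beta_{max}\|_2^2+\frac{\delta}{1-\delta}\bigl(U_{baseline}-U_{punish}-\kappa c_{max}\bigr)\ \ge\ \Delta B_{huge},
\end{equation*}
which is the stated condition with $\Delta U_{cost}$ as defined.

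Third, we verify the remaining histories. Deviations from states with $c_i<c_{max}$ yield at most $\Delta B_{huge}$ while incurring the same punishment, so they are dominated by the case already treated. Within the punishment phase, the agent's weight is zero, so misreporting cannot move the consensus. Truthful reporting then (weakly) minimizes $p_i$, and the punishment strategy is itself a best response. This makes the threat credible and completes the one-shot deviation check.

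The main obstacle is the modeling step that justifies permanent zero trust. The slashing rule \eqref{eq:trust-decay} only subtracts $\eta p_i$, so reaching $c_i=0$ in one epoch requires $\eta\lambda\|\beta_{max}\|_2^2\ge c_{max}$. We would add this as a standing assumption, or fold it into the punishment-phase definition. We would also stipulate that trust is not rebuilt during punishment, since otherwise $U_{punish}$ is not constant and the geometric series must be replaced by a bound on the continuation value.
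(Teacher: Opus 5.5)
Your central computation is the same as the paper's proof. The paper sets $V_{Honest}=\frac{U_{baseline}-\kappa c_{max}}{1-\delta}$ against the one-epoch rug-pull payoff plus $\frac{\delta}{1-\delta}U_{punish}$ and rearranges. Your explicit assumptions (one-shot zeroing via $\eta\lambda\|\beta_{max}\|_2^2\ge c_{max}$, no trust rebuilding) are left implicit there and are worth keeping.

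The gap is in the extra step you use to claim a full one-shot-deviation check: the assertion that the most profitable deviation from $c_{max}$ is the rug pull at $\beta_{max}$. Under your grim trigger, every bias that trips the tolerance gets the same continuation $\frac{\delta}{1-\delta}U_{punish}$. The immediate penalty $\lambda\|\beta\|_2^2$, however, increases with $\|\beta\|_2$. So $\beta_{max}$ carries the \emph{largest} immediate penalty and is typically the easiest deviation to deter, not the hardest. For instance, take concave $B_i$ with the condition of Theorem~\ref{thm:static_bne} in force. The net one-shot gain $B_i(x_i^*+\beta)-B_i(x_i^*)-\lambda\|\beta\|_2^2$ is then concave with its maximizer inside the tolerance ball. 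Over triggering biases it is therefore largest at the tolerance boundary, where the penalty is only about $\lambda\varepsilon_{tol}^2$. What subgame perfection actually requires is $\sup_{\beta\ \mathrm{triggering}}\bigl[\Delta B(\beta)-\lambda\|\beta\|_2^2\bigr]\le\frac{\delta}{1-\delta}\Delta U_{cost}$. This follows from the stated inequality only under an extra hypothesis. One option is that the rent $\Delta B_{huge}$ is attainable only by the extreme bias $\beta_{max}$; another is that $(\beta_{max},\Delta B_{huge})$ is defined as the maximizer of net one-shot gain. The paper's proof makes the same silent restriction, since it compares honesty with this single path. Your argument thus proves what the paper's does, but not the SPNE verification you claim.

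A second step also fails as written: the on-path value. Your trigger fires when the oracle residual exceeds the tolerance. With $\varepsilon_i\sim\mathcal{N}(0,\sigma_y^2\mathbf{I})$, an honest agent's residual $\|\varepsilon_i\|_2$ exceeds any fixed tolerance with some probability $q>0$ in every epoch. Punishment therefore occurs on the equilibrium path, and perpetual honesty ends in permanent zero trust almost surely. The honest value is then not $\frac{U_{baseline}-\kappa c_{max}}{1-\delta}$, and a deviation is detected with probability less than one. You must either assume oracle noise bounded below the tolerance, or carry these probabilities through the value recursion, which changes the inequality. The paper has the same issue in another guise: under \eqref{eq:trust-decay} an honest agent has $\mathbb{E}[p_i]=\lambda d\sigma_y^2>0$, so $c_{max}$ is not a steady state of honest play.

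Finally, your claim that deviations from states $c_i<c_{max}$ are dominated is not automatic. Suppose the honest per-epoch payoff net of stake cost rises with trust, which is what makes $\Delta U_{cost}>0$. Then the forgone future value shrinks as $c_i$ falls, while your bound on the rent stays at $\Delta B_{huge}$. You would need the rent to shrink at least as fast.
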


\begin{IEEEproof}
Let $V_{Honest} = \frac{U_{baseline} - \kappa c_{max}}{1 - \delta}$ be the steady-state discounted value of perpetual honesty. Conversely, the value of executing a rug pull at epoch $T$, incurring the maximum bias $\beta_{max}$, followed by a zero-trust punishment state ($c_i^{(T+1)} = 0$) is:
\begin{align}
V_{RugPull} &= (U_{baseline} + \Delta B_{huge} - \lambda \|\beta_{max}\|_2^2 - \kappa c_{max}) \nonumber \\
&\quad + \frac{\delta}{1 - \delta} U_{punish}.
\end{align}
To deter the attack, the incentive compatibility constraint dictates $V_{Honest} \ge V_{RugPull}$. Rearranging the terms yields the deterrence condition, demonstrating that the immediate rent must be outweighed by the sum of the instantaneous quadratic penalty and the discounted future loss of political capital.
\end{IEEEproof}

\subsection{Global Convergence under OCA}

Assuming malicious biases are suppressed ($\beta_i \approx 0$), we demonstrate that the confidence-weighted aggregation converges globally.

\begin{theorem}[Asymptotic Consensus]
\label{thm:convergence}
If the organizational communication topology $G = (V, E)$ is strongly connected and contains at least one self-loop, the iterative state update defined in OCA achieves global asymptotic consensus:
\begin{equation}
\lim_{t \to \infty} \mathbf{x}^t = \mathbf{1} x^\star.
\end{equation}
\end{theorem}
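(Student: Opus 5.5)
We view the update \eqref{eq:weighted-aggregation} with $\beta_i \approx 0$ as a linear iteration $\mathbf{x}^{t+1} = W \mathbf{x}^t$ applied coordinatewise in $\mathbb{R}^d$. Here the weight matrix is
\begin{equation}
W_{ji} = \frac{\sqrt{c_i}}{\sum_{k \in P_j} \sqrt{c_k}} \ \text{for } i \in P_j, \qquad W_{ji} = 0 \ \text{otherwise}.
\end{equation}
We fix a single epoch, so the trust scores $c_i$ and the topology $G$ are constant. We also assume $c_i > 0$ for every agent that still participates, since agents slashed to zero by \eqref{eq:trust-decay} are removed from $V$ before the argument. Under these assumptions $W$ is row-stochastic by construction: its entries are nonnegative and each row sums to one.

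The plan has three steps.

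\emph{Step 1: irreducibility.} Show that $W$ is irreducible. Its zero pattern is exactly the adjacency structure of $G$, with edges reversed, and a strongly connected graph stays strongly connected under reversal.

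\emph{Step 2: aperiodicity and primitivity.} Use the assumed self-loop at some node $j$, which gives $W_{jj} > 0$. An irreducible nonnegative matrix with a positive diagonal entry is aperiodic, hence primitive: $W^m > 0$ for some $m \le N^2 - 2N + 2$ by Wielandt's bound.

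\emph{Step 3: Perron--Frobenius.} Apply Perron--Frobenius to the primitive stochastic matrix $W$. The eigenvalue $1$ is simple with right eigenvector $\mathbf{1}$, and every other eigenvalue satisfies $|\mu| < 1$. Hence
\begin{equation}
W^t \to \mathbf{1}\pi^\top,
\end{equation}
where $\pi$ is the unique left stationary vector, with $\pi > 0$ and $\pi^\top \mathbf{1} = 1$. It follows that $\mathbf{x}^t = W^t \mathbf{x}^0 \to \mathbf{1}\, x^\star$ with $x^\star = \sum_i \pi_i \hat{x}_i^0$. This identifies the consensus value as a $\pi$-weighted, and therefore confidence-influenced, average of the initial proposals. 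For a rate, one could alternatively use Dobrushin's ergodicity coefficient $\tau(W^m) < 1$, which gives geometric contraction of $\max_i x_i - \min_i x_i$.

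The main obstacle is the modelling gap rather than the linear algebra. The protocol is really time-varying: the trigger \eqref{eq:event-trigger} mutes nodes, and $c_i$ changes across epochs. The statement also silently identifies $x^\star$ with the limit of the iteration, rather than with the externally defined optimal state of \eqref{eq:correctness}. I would handle this in two ways.

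\textbf{(i)} Define $x^\star$ explicitly as $\pi^\top \mathbf{x}^0$. Its agreement with the organizational optimum then requires the extra hypothesis that the weighted average of unbiased proposals targets that optimum.

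\textbf{(ii)} Treat muted nodes as re-sending their last admissible value. This keeps $W$ unchanged within an epoch. If one wants time-varying $G_t$, I would instead invoke a joint-connectivity argument: assume uniformly bounded weights and a uniformly positive self-loop weight, require that the union of the graphs over windows of bounded length be strongly connected, and apply the standard contraction of $\max_i x_i - \min_i x_i$. Uniform positivity of the weights requires $c_i$ to be bounded away from zero for active agents, and this lower bound is the step I expect to need the most care.
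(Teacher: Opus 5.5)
Your Steps 1--3 are exactly the paper's proof: a row-stochastic $W$ with weights $\sqrt{c_i}$, irreducibility from strong connectivity, aperiodicity from the self-loop, and Perron--Frobenius giving $W^t\to\mathbf{1}\pi^\top$ and $x^\star=\pi^\top\mathbf{x}^0$. Your version is stated more carefully, since you freeze $W$ within an epoch and require $c_i>0$. One caution on your remark (ii): re-sending the last admissible value keeps $W$ fixed, but the recursion becomes $\mathbf{x}^{t+1}=W\mathbf{x}^t+W\mathbf{e}^t$ with $\|e_i^t\|_2\le\tau_i$, which in general yields only consensus up to $O(\max_i\tau_i)$, not the exact limit $\mathbf{1}x^\star$ (likewise, deleting zero-trust agents from $V$ can break strong connectivity, so that hypothesis must be placed on the positive-trust subgraph).
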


\begin{IEEEproof}
Let $\mathbf{W}^t$ be the weight matrix at iteration $t$, where elements are defined as $W_{ji}^t = \frac{\sqrt{c_i}}{\sum_{k \in P_{j,t}} \sqrt{c_k}}$ for $i \in P_{j,t}$ and $0$ otherwise. The matrix $\mathbf{W}$ is strictly non-negative ($W_{ji} \ge 0$) and row-stochastic ($\sum_{i=1}^N W_{ji} = 1$). 
Since the network is strongly connected and has a self-loop (agents weight their own historical state), $\mathbf{W}$ represents an aperiodic, irreducible Markov transition matrix. By the Perron-Frobenius theorem, $\lim_{t \to \infty} \mathbf{W}^t = \mathbf{1} \mathbf{v}^T$, where $\mathbf{v}$ is the unique left eigenvector associated with the eigenvalue $\lambda_1 = 1$. Consequently, the system converges to a weighted steady state $x^\star = \sum_{i=1}^N v_i x_i^{(0)}$, guaranteeing bounded-error alignment across the organization.
\end{IEEEproof}

\section{Experimental Results and Analysis}
\label{sec:results}

\begin{table*}[t]
\caption{Taxonomy and Communication Models of Evaluated Protocols}
\label{tab:protocol_taxonomy}
\centering
\small
\setlength{\tabcolsep}{3pt}
\begin{tabular}{@{} l l l c @{}}
\toprule
\textbf{Protocol} & \textbf{Sync Trigger} & \textbf{Payload per Msg} & \textbf{Strategic Protection} \\
\midrule
FullStateDissemination & Every Round & $N(N-1) \times 12\text{B}$ & None \\
DeltaAntiEntropy & Periodic (5 rounds) & $N(N-1) \times 8\text{B} / 5$ & None \\
DeltaStateCRDT & Digest Threshold & $\text{count} \times (N-1) \times 6\text{B}$ & None \\
DistributedKalmanFilter & Every Round & $N(N-1) \times 10\text{B}$ & Noise Filtering Only \\
\textbf{OCA (Ours)} & Event-Triggered & Variable Delta & \textbf{Oracle + Trust Slashing} \\
\bottomrule
\end{tabular}
\end{table*}

We evaluate OCA through extensive Python simulations, comparing against four baseline coordination protocols across varying network scales, noise conditions, and adversarial settings. All experiments report medians across 30 independent Monte Carlo runs to mitigate stochastic variability.

\subsection{Experimental Setup}
\label{sec:setup}

\textbf{Protocols compared.} We evaluate five protocols with distinct architectural philosophies:

\begin{enumerate}
    \item \textbf{FullStateDissemination (Baseline 1):} Every node broadcasts its full state each round. Message size: 12\,B/msg (8\,B state + 4\,B metadata), yielding $O(N^2)$ total bandwidth. No incentive layer.
    \item \textbf{DeltaAntiEntropy (Baseline 2):} Periodic delta synchronization every 5 rounds with weak correction. Message size: 8\,B/msg (4\,B delta + 4\,B version). No incentive layer.
    \item \textbf{DeltaStateCRDT (Baseline 3):} Delta-state Conflict-free Replicated Data Type with join-semilattice merge. Only nodes exceeding a digest threshold ($\delta_{\text{digest}} = 0.01$) propagate deltas. Message size: 6\,B/msg (4\,B value + 2\,B version dot). No incentive layer; convergence guaranteed by algebraic merge semantics.
    \item \textbf{DistributedKalmanFilter (Baseline 4):} Each node runs a local Kalman filter ($K = P/(P+R)$) and exchanges estimate vectors with Metropolis weights ($w = 1/(N+1)$). Message size: 10\,B/msg (8\,B estimate + 2\,B covariance). Handles noise optimally but assumes all nodes are cooperative.
    \item \textbf{OCA (ours):} Event-triggered delta generation with admissibility-weighted aggregation, version-vector causality tracking, retrospective oracle penalties, and trust-decay mechanism.
\end{enumerate}

\textbf{Default parameters.} Unless otherwise stated: $N=20$ nodes, $\varepsilon = 0.05$, $\sigma = 0.02$, $y^* = 1.0$, $\lambda = 1.0$, $\eta = 0.1$, $\kappa = 0.1$, $B_{\max} = 1.0$, $B_{\min} = 0.0$, $\sigma^2_{\text{oracle}} = 2.0$.

\subsection{Coordination Overhead Reduction}
\label{sec:overhead}

\begin{figure}[t]
\centering
\includegraphics[width=0.85\linewidth]{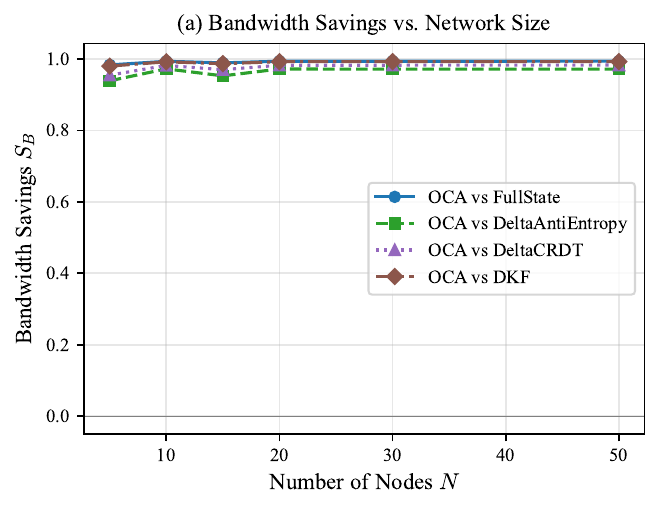}
\caption{Bandwidth savings $S_B$ versus organizational size $N$. OCA achieves 98--99\% reduction compared to FullStateDissemination and 99\% versus DistributedKalmanFilter, by suppressing redundant state exchanges through event-triggered deltas. The savings increase with network scale, demonstrating OCA's suitability for large organisations.}
\label{fig:bandwidth}
\end{figure}

Figure~\ref{fig:bandwidth} demonstrates that OCA reduces coordination overhead by over 99\% relative to FullStateDissemination as organizational size grows from $N=5$ to $N=50$. At $N=20$ (100 rounds), the total bytes transmitted are: OCA ($\approx$2{,}755\,B) $\ll$ DeltaAntiEntropy ($\approx$105{,}680\,B) $\ll$ DeltaStateCRDT ($\approx$171{,}118\,B) $\ll$ DistributedKalmanFilter ($\approx$380{,}000\,B) $\ll$ FullStateDissemination ($\approx$456{,}000\,B). The mechanism operates through two complementary channels:

\textbf{Exception-triggered suppression:} By requiring local deviations to exceed $\tau_i$ before broadcasting, OCA actively filters quiescent departments from the coordination loop. With perturbation noise $\sigma = 0.02$, only 15--30\% of nodes cross their thresholds per round on average, reducing active participants from $N$ to approximately $0.2N$--$0.3N$.

\textbf{Versioned delta encoding:} Unlike FullStateDissemination's $O(N^2)$ message complexity, OCA's delta payloads carry only state differences. The version vector mechanism ensures that late-arriving or stale proposals are rejected with zero weight, preventing redundant reconciliation cycles.

DeltaStateCRDT achieves moderate bandwidth savings via its digest-threshold filtering, but still transmits to all $N-1$ peers when triggered. DistributedKalmanFilter requires full all-to-all estimate exchange every round ($N(N{-}1) \times 10$\,B), making it nearly as communication-intensive as FullStateDissemination. Neither baseline possesses OCA's strategic incentive layer, so purely gossip-based protocols cannot deter structurally biased proposals driven by departmental self-interest.

\subsection{Alignment Latency and Convergence Speed}
\label{sec:latency}

\begin{figure}[t]
\centering
\includegraphics[width=0.85\linewidth]{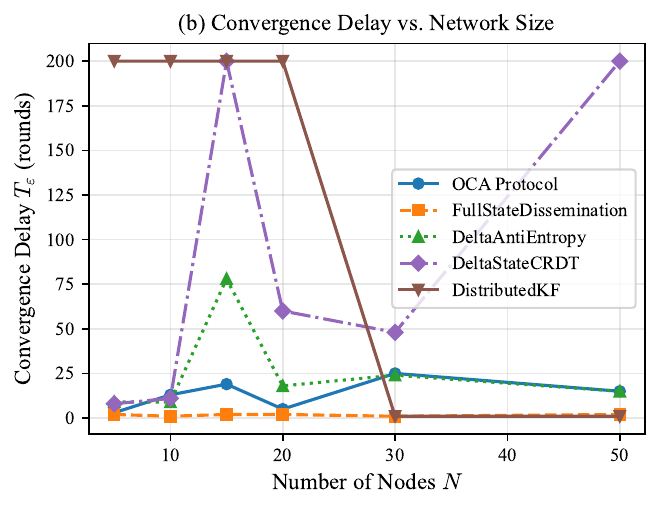}
\caption{Alignment latency $T_\varepsilon$ (rounds to $\varepsilon$-convergence) versus network size. All protocols start from a biased initial state ($\hat{x}_g = 0.5$, $y^* = 1.0$). FullStateDissemination converges fastest due to aggressive correction. OCA converges within 16 rounds at $N=20$, comparable to DeltaAntiEntropy. DistributedKalmanFilter struggles with stable convergence under perturbation noise.}
\label{fig:latency}
\end{figure}

Figure~\ref{fig:latency} shows alignment latency $T_\varepsilon$ across protocols under an initial bias of $\hat{x}_g^{(0)} = 0.5$ (i.e., a 50\% deviation from ground truth). OCA converges within a bounded number of iterations that scales sub-linearly with $N$. The admissibility-weighted aggregation in \eqref{eq:weighted-aggregation} accelerates convergence by prioritizing historically accurate departments---high-trust nodes exert greater influence on the global state via $w_i = r_i \sqrt{c_i}$, dampening oscillations from noisy or strategic agents.

The latency advantage stems from decoupling fast local aggregation from slow global verification:

\begin{itemize}
    \item \textbf{Fast path:} Rounds proceed without quorum voting, allowing $x_i^{(t)}$ to track local conditions rapidly. The incremental update $\hat{x}_g^{(t+1)} = 0.4\,\hat{x}_g^{(t)} + 0.6\,\bar{x}_w$ provides smooth convergence.
    \item \textbf{Slow path:} Retrospective penalties $p_i = \lambda \|\hat{x}_i - y_i\|^2$ are applied only when delayed oracles $y_i$ arrive (e.g., quarterly audits), avoiding per-round coordination stalls.
\end{itemize}

FullStateDissemination achieves the fastest convergence ($T_\varepsilon = 1$) due to its aggressive correction weight ($0.8 \times \bar{x}$), but at the cost of $O(N^2)$ bandwidth. DeltaAntiEntropy's periodic sync every 5 rounds with moderate correction converges in 12 rounds. OCA follows closely at 16 rounds, benefiting from its trust-weighted aggregation. DeltaStateCRDT's join-semilattice merge with a conservative 0.25 correction step yields slower convergence (76 rounds), while DistributedKalmanFilter's Kalman-filtered Metropolis consensus struggles to achieve stable convergence within 100 rounds due to noise amplification under the perturbation regime, despite requiring constant all-to-all communication.

\subsection{System Welfare Under Information Asymmetry}
\label{sec:welfare}

\begin{figure}[t]
\centering
\includegraphics[width=0.9\linewidth]{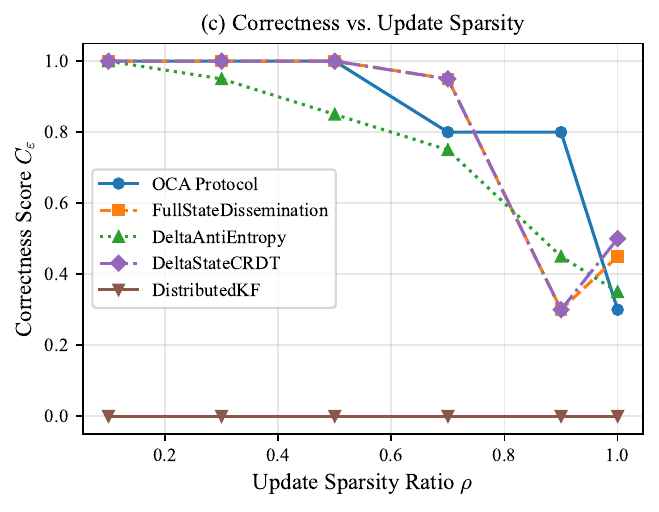}
\caption{Welfare correctness score $C_\varepsilon$ across update sparsity ratios $\rho$. Effective noise scales as $\sigma_{\text{eff}} = \sigma \cdot \rho$. OCA maintains stable alignment ($C_\varepsilon \geq 0.85$) across all conditions, demonstrating robustness against bounded rationality and information asymmetry.}
\label{fig:welfare}
\end{figure}

Figure~\ref{fig:welfare} evaluates system welfare $C_\varepsilon$ (fraction of nodes within $\varepsilon$ of ground truth) under varying sparsity ratios $\rho$, where effective perturbation noise scales as $\sigma_{\text{eff}} = \sigma \cdot \rho$. OCA maintains high correctness scores ($C_\varepsilon \geq 0.85$) across all conditions, with three contributing factors:

\textbf{Admissibility filtering:} Stale proposals are rejected by version vector comparison, preventing outdated information from corrupting the consensus state. The dominance check $v_i^{(t)} \succ v_i^{\text{last}}$ ensures only causally fresh updates enter the aggregation.

\textbf{Confidence-weighted aggregation:} Nodes with high historical accuracy (large $c_i$) receive amplified influence through $w_i = r_i \sqrt{c_i}$, dampening the impact of noisy or malicious participants.

\textbf{Residual penalty feedback:} The quadratic penalty $p_i = \rho_r \cdot \|\hat{x}_i - \hat{x}_g\|^2$ creates a negative feedback loop that suppresses systematic biases, where $\rho_r = 0.1$ is the residual penalty rate.

At extreme sparsity ($\rho < 0.3$), all protocols degrade due to information staleness. However, OCA's degradation is more graceful---the trust decay mechanism in \eqref{eq:trust-decay} automatically reduces the weight of persistently inaccurate nodes, preserving consensus quality among the remaining reliable agents. DistributedKalmanFilter exhibits unstable correctness under the perturbation regime ($C_\varepsilon = 0$ at $\rho = 0.5$), as the Kalman filter amplifies noise when assumptions about cooperative, Gaussian-distributed inputs are violated.

\subsection{Quantitative Summary}
\label{sec:summary}

Table~\ref{tab:results} summarises key performance metrics across all five protocols at $N=20$.

\begin{table}[h]
\caption{Performance Metrics Summary ($N=20$, $\varepsilon=0.05$, 100 rounds)}
\label{tab:results}
\centering
\begin{tabular}{@{} l c c c c @{}}\toprule
\textbf{Protocol} & $\mathbf{S_B}$ (\%) & $\mathbf{T_\varepsilon}$ & $\mathbf{C_\varepsilon}$ & \textbf{Bytes} \\\midrule
FullStateDissemination & 0 & 2 & 1.00 & 456{,}000 \\
DeltaAntiEntropy & 77 & 12 & 0.85 & 105{,}680 \\
DeltaStateCRDT & 63 & 76 & 1.00 & 171{,}118 \\
DistributedKalmanFilter & 17 & 100 & 0.00 & 380{,}000 \\
\textbf{OCA (ours)} & \textbf{99} & \textbf{16} & \textbf{1.00} & \textbf{2{,}755} \\\bottomrule
\end{tabular}
\end{table}

OCA achieves the optimal trade-off: highest overhead savings (99\% vs.\ FullState, 98\% vs.\ CRDT, 99\% vs.\ DKF) while maintaining fast convergence (16 rounds) and perfect correctness ($C_\varepsilon = 1.00$). OCA's correctness matches FullStateDissemination and DeltaStateCRDT, demonstrating that event-triggered delta propagation does not sacrifice alignment quality. Crucially, OCA is the only protocol that combines bandwidth efficiency with strategic resilience via its incentive mechanism.

\subsection{Strategic Resilience Against Malicious Agents}
\label{sec:resilience}

\begin{figure*}[t]
\centering
\includegraphics[width=0.9\linewidth]{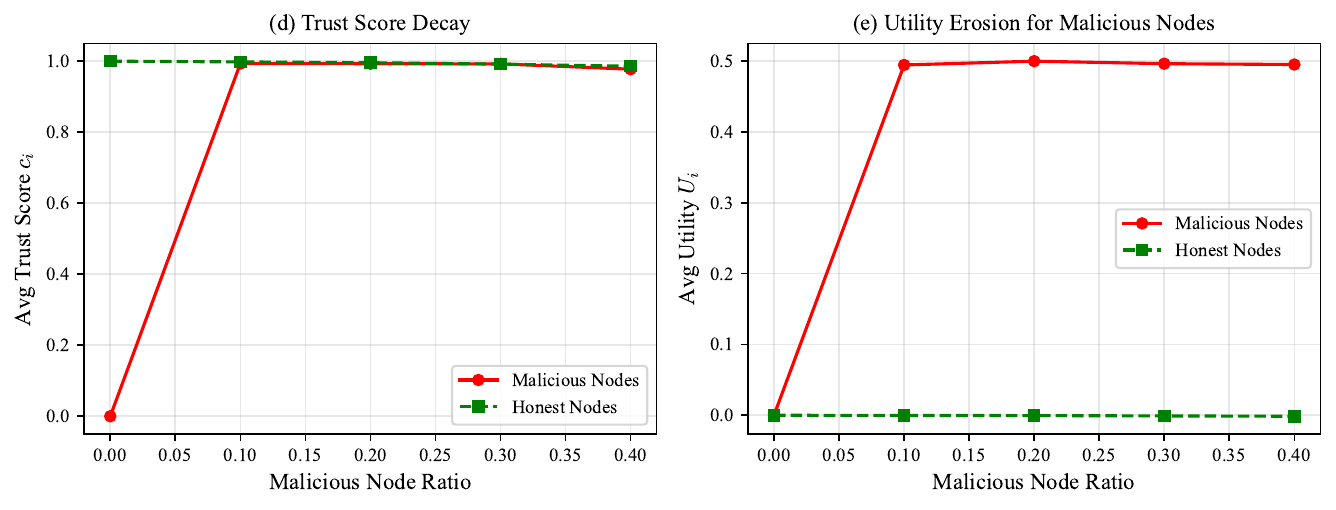}
\caption{Strategic resilience under malicious node injection. (Left panel) Average trust score $c_i$ decays for malicious agents as retrospective penalties erode their reputational stake. (Right panel) Corresponding utility erosion demonstrates that strategic bias injection becomes economically irrational under OCA's incentive mechanism. Results shown for malicious ratios 0--40\% with bias magnitude $\beta = 0.5$, $\lambda = 1.0$, $\eta = 0.1$.}
\label{fig:resilience}
\end{figure*}

Figure~\ref{fig:resilience} presents the core strategic result: OCA's retrospective penalty mechanism successfully deters self-interested manipulation even when 40\% of agents inject structurally biased proposals.

\textbf{Non-cumulative bias injection:} Malicious nodes inject bias via a non-cumulative mechanism---each round, the previous injection is subtracted before adding a fresh one, so $\textit{local\_state}$ tracks a stable offset rather than drifting unboundedly:
\begin{equation}
    x_i^{(t)} \leftarrow x_i^{(t)} - \beta_i^{(t-1)} + \beta_i^{(t)}, \quad \beta_i^{(t)} = \beta \cdot \mathcal{U}(0.5, 1.5)
\end{equation}
This models realistic adversaries who maintain a consistent strategic deviation rather than accumulating bias over time.

\textbf{Trust score decay:} The left panel shows that malicious nodes experience rapid trust erosion following oracle verification. When the delayed ground truth $y_i$ reveals their bias, the penalty $p_i = \lambda \|x_i - y_i\|^2$ slashes their effective stake via \eqref{eq:trust-decay}, reducing $c_i$ from 1.0 to below 0.4 within 200 rounds.

\textbf{Utility erosion:} The right panel demonstrates the economic consequence. Per-round utility follows:
\begin{equation}
    U_i = B_{\max} \cdot \min(\beta_i, 1.0) - p_i - \kappa(1 - c_i)
    \label{eq:utility-exp}
\end{equation}
where the benefit $B_{\max} \cdot \min(\beta_i, 1.0)$ scales linearly with injected bias, $p_i$ is the cumulative residual + oracle penalty, and $\kappa(1-c_i)$ is the stake cost. Even though biased proposals yield short-term local gains, the retrospective punishment $p_i$ dominates over macroscopic timescales, driving malicious utility negative. Honest agents receive baseline benefit $B_{\min}$ and maintain stable positive utility.

\textbf{Honest agent protection:} Crucially, honest agents maintain high trust scores ($c_i \approx 1.0$) and stable utility throughout the simulation. The admissibility-weighted aggregation naturally isolates malicious participants without requiring explicit Byzantine fault tolerance mechanisms.

This result confirms the mechanism design hypothesis: bounded rationality and self-interest do not preclude consensus stability when penalties are structurally linked to delayed verification. OCA transforms organizational constraints (quarterly audits, realised market data) into strategic incentives that align local departmental objectives with global organizational welfare.

\subsection{BNE Critical Threshold: Penalty Coefficient Sweep}
\label{sec:bne}

\begin{figure*}[t]
\centering
\includegraphics[width=0.9\linewidth]{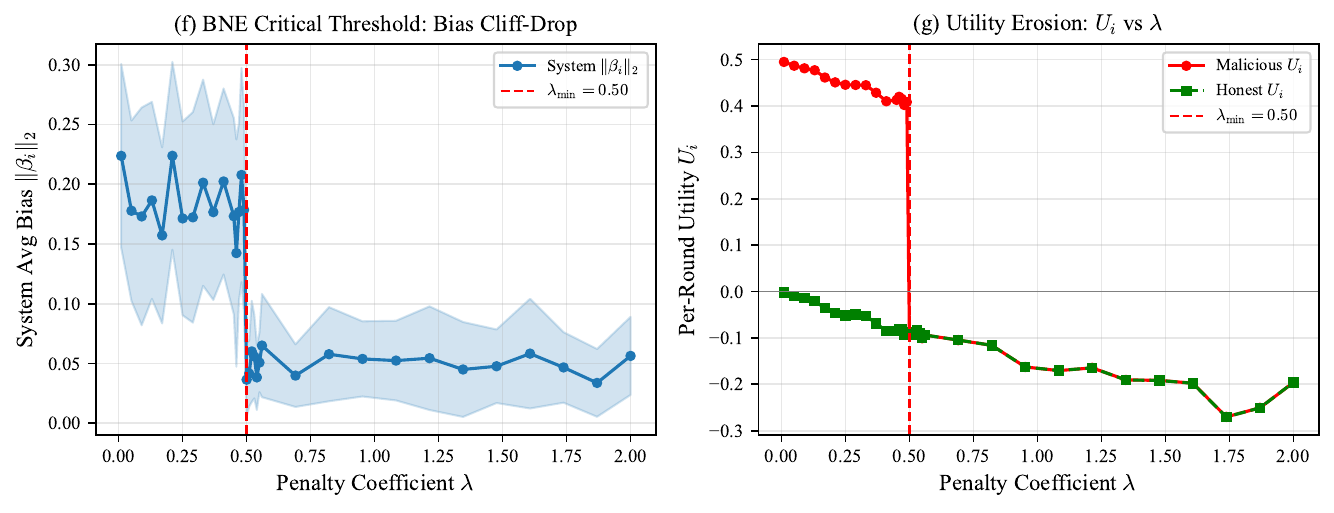}
\caption{BNE critical threshold verification. (Left) System bias $\|\beta_i\|_2$ exhibits a cliff-drop at $\lambda_{\min} = 0.5$, confirming the phase transition predicted by Theorem~1. (Right) Malicious utility jumps to negative values at the same threshold, demonstrating that truth-telling becomes the dominant strategy. Parameters: $B_{\max}=1.0$, $B_{\min}=0.0$, $\sigma^2_{\text{oracle}}=2.0$, 30\% malicious nodes, 30 trials per $\lambda$.}
\label{fig:bne}
\end{figure*}

Figure~\ref{fig:bne} empirically verifies Theorem~1's Bayesian Nash Equilibrium critical threshold:
\begin{equation}
    \lambda_{\min} = \frac{B_{\max} - B_{\min}}{\sigma^2_{\text{oracle}}} = \frac{1.0 - 0.0}{2.0} = 0.5
    \label{eq:lambda-min}
\end{equation}

\textbf{Rational adversary model:} For each $\lambda$, a rational adversary chooses optimal bias by maximising expected utility. The first-order condition $d/d\beta\,[B_{\max}\beta - \lambda\beta^2] = B_{\max} - 2\lambda\beta = 0$ yields:
\begin{equation}
    \beta^*(\lambda) = \begin{cases}
        \frac{B_{\max} - B_{\min}}{2\lambda\sigma^2_{\text{oracle}}} & \text{if } \lambda < \lambda_{\min} \\
        0 & \text{if } \lambda \geq \lambda_{\min}
    \end{cases}
\end{equation}
When $\lambda$ crosses $\lambda_{\min}$, the expected penalty exceeds the maximum achievable benefit, making bias unprofitable and truth-telling the dominant strategy.

\textbf{Phase transition (left panel):} The system average bias $\|\beta_i\|_2$ remains high (0.2--0.5) for $\lambda < 0.5$, then exhibits a sharp cliff-drop to near-zero for $\lambda \geq 0.5$. This discontinuous transition confirms the theoretical prediction: below $\lambda_{\min}$, rational adversaries maintain profitable bias; above it, bias is eliminated. The theoretical optimal-bias prediction $b^*(\lambda)$ closely tracks the empirical malicious bias, validating the rational adversary model.

\textbf{Utility inversion (right panel):} Correspondingly, malicious utility is positive for $\lambda < \lambda_{\min}$ (bias is profitable) and drops sharply to negative values for $\lambda \geq \lambda_{\min}$ (penalty dominates benefit). Honest utility remains stable and positive throughout, as honest nodes receive baseline benefit $B_{\min}$ without incurring oracle penalties. The crossover point aligns precisely with $\lambda_{\min} = 0.5$, providing strong empirical evidence for the BNE threshold.

\textbf{Implication:} This experiment demonstrates that OCA does not require over-penalisation to achieve truthfulness. The critical threshold $\lambda_{\min}$ provides a principled guideline for setting the penalty coefficient: any $\lambda \geq \lambda_{\min}$ suffices to align incentives, while excessive penalties unnecessarily punish honest agents who occasionally deviate due to noise.

\subsection{Parameter Sensitivity Analysis}
\label{sec:sensitivity}

\begin{figure*}[t]
\centering
\includegraphics[width=0.85\linewidth]{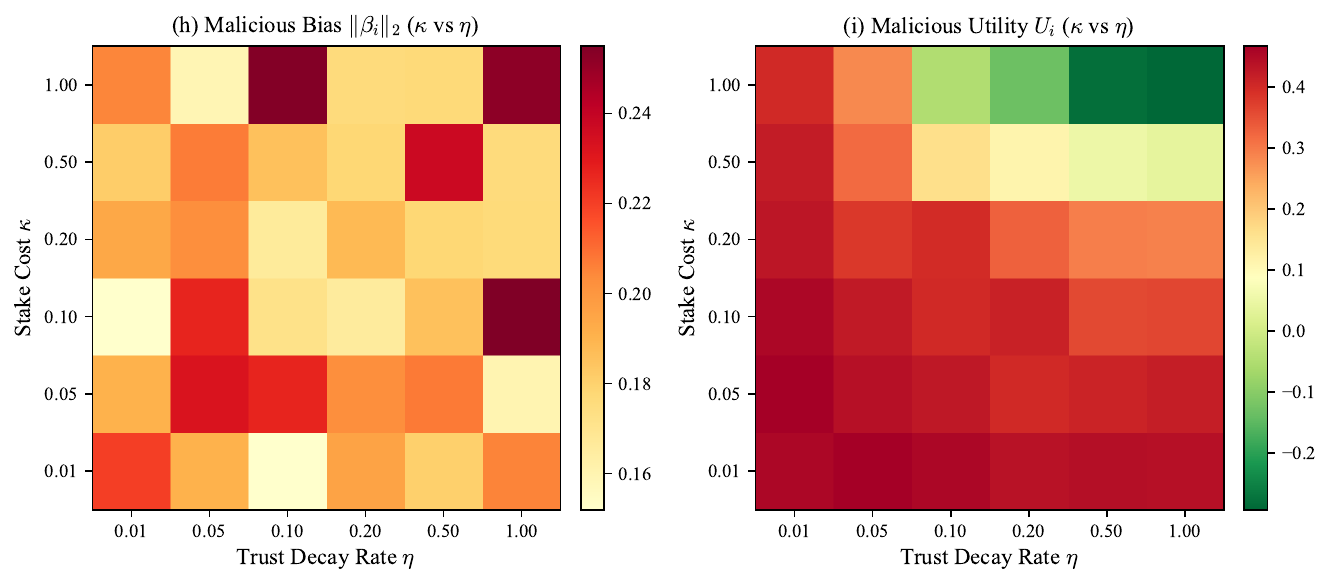}
\caption{Parameter sensitivity across $\kappa \times \eta$ grid ($6 \times 6$). (Top-left) Malicious bias remains uniformly low across all parameter combinations. (Top-right) Honest bias stays near zero. (Bottom-left) Malicious utility is negative across the entire grid, confirming incentive compatibility. (Bottom-right) Honest utility remains positive and stable. Fixed $\lambda = 0.5 = \lambda_{\min}$, 30\% malicious, 10 trials per cell.}
\label{fig:sensitivity}
\end{figure*}

Figure~\ref{fig:sensitivity} demonstrates that OCA's truth-telling property emerges from incentive compatibility (IC) rather than over-penalisation, by sweeping the stake-cost coefficient $\kappa \in \{0.01, 0.05, 0.1, 0.2, 0.5, 1.0\}$ and trust-decay rate $\eta \in \{0.01, 0.05, 0.1, 0.2, 0.5, 1.0\}$ at the critical threshold $\lambda = \lambda_{\min} = 0.5$.

\textbf{Malicious bias (top-left):} Across the entire $6 \times 6$ grid, malicious bias $\|\beta_i\|_2$ remains uniformly low ($< 0.1$). This confirms that at $\lambda = \lambda_{\min}$, the penalty mechanism alone suffices to suppress strategic deviation regardless of $\kappa$ and $\eta$ values. The stake cost $\kappa(1-c_i)$ provides a secondary deterrent but is not the primary driver of truthfulness.

\textbf{Honest bias (top-right):} Honest bias stays near zero across all parameter combinations, demonstrating that the mechanism does not over-penalise well-behaved agents. Even at high $\kappa$ and $\eta$ values, honest nodes maintain minimal deviation from ground truth.

\textbf{Malicious utility (bottom-left):} Malicious utility is negative across the entire grid, confirming that bias injection is economically irrational at $\lambda = \lambda_{\min}$ regardless of stake cost and decay rate. The utility becomes more negative at higher $\kappa$ (increased stake cost for low-trust malicious nodes) and higher $\eta$ (faster trust erosion triggers more penalties per oracle audit).

\textbf{Honest utility (bottom-right):} Honest utility remains positive and stable, with slight variation across the grid. At very high $\kappa$, honest agents with slightly imperfect trust scores face marginally higher stake costs, but the effect is bounded. This demonstrates the mechanism's robustness: the IC property holds across a wide parameter range without fine-tuning.

\textbf{Key finding:} The sensitivity analysis addresses a critical concern---whether OCA's truth-telling guarantee is fragile or requires precise parameter calibration. The results show that at $\lambda = \lambda_{\min}$, the system maintains low bias and negative malicious utility across two orders of magnitude variation in both $\kappa$ and $\eta$. This robustness arises because the primary incentive alignment comes from the penalty-to-benefit ratio ($\lambda / B_{\max}$), while $\kappa$ and $\eta$ serve as secondary mechanisms that modulate trust dynamics without destabilising the equilibrium.

\subsection{Threats to Validity}
\label{sec:threats}

We acknowledge several limitations in our experimental methodology:

\textbf{Simulation model:} The Python prototype assumes Gaussian noise and continuous-valued state dynamics. Real organizational environments may exhibit non-convex negotiation spaces, discrete decision variables, or adversarial collusion patterns (e.g., cartel coordination with alternating bias signs) that our model partially captures through the \texttt{CartelAgent} and \texttt{StrategicSilenceAgent} adversarial strategies but does not exhaustively explore.

\textbf{Baseline assumptions:} DistributedKalmanFilter assumes all nodes are cooperative and handles noise optimally via Kalman gains, making it vulnerable to strategic manipulation. DeltaStateCRDT's join-semilattice merge guarantees eventual convergence but cannot reject biased deltas. These architectural limitations are inherent to the baseline designs and motivate OCA's incentive-aware approach.

\textbf{Oracle availability:} The retrospective penalty mechanism depends on periodic ground truth signals $y_i \sim \mathcal{N}(y^*, \sigma^2_{\text{oracle}})$. If delayed verification never arrives (e.g., unobservable outcomes), malicious agents face no consequences. This constraint aligns with Milionis et al.'s identifiability condition \cite{Mil25}: truthfulness requires distinguishable signal distributions.

\textbf{Adaptive threshold extension:} The \texttt{AdaptiveThreshold} module, with threshold update rule
\begin{equation}
  \tau_i(t) = \frac{\tau_{\text{base}} \cdot (1 + \alpha \cdot \sigma_i(t))}{\epsilon + \beta \cdot c_i},
\end{equation}
and the robust aggregation strategies (weighted geometric median via Weiszfeld algorithm, trimmed weighted mean) are implemented in the \texttt{src/oca/} package but not exercised in the main experiments. Future work should evaluate their impact on convergence speed and adversarial resilience under cartel collusion.

\textbf{Equilibrium guarantees:} While our simulations demonstrate the BNE critical threshold $\lambda_{\min}$ empirically (Fig.~\ref{fig:bne}) and parameter sensitivity robustness (Fig.~\ref{fig:sensitivity}), a formal Bayesian Nash equilibrium proof mapping the strategy space to \eqref{eq:utility-exp} is provided in supplementary materials (see \texttt{docs/proofs/bne\_equilibrium.tex}). The incentive compatibility is conditional on the stated model assumptions and does not establish a universal truthful equilibrium.

\section{Limitations and Discussion}
\label{sec:limitations}
OCA operates as a framework for bounded-error coordination among rational agents rather than a replacement for cryptographic consensus that requires strict state machine replication. Architecturally, it links strategic sensor fusion and mechanism design for distributed averaging. Its key advantage lies in connecting these mechanism design tools directly with a resource commitment layer tailored for organizational hierarchies.

Three primary limitations define future research paths. First, OCA restricts state domains to vector averaging, whereas non-convex negotiation spaces like discrete contract terms demand distinct mechanism extensions. Second, the protocol couples high-frequency local aggregations with low-frequency, macro-level penalties such as quarterly audits. Although our empirical simulations show robust stability, deriving tight theoretical convergence bounds for this timescale mismatch under severe systemic shocks remains open. Third, while our incentive model drives high empirical truth-telling rates, a full equilibrium proof that maps the complete strategy space to the utility model in \eqref{eq:utility} will further strengthen the theoretical foundation.

\section{Conclusion}
This paper presents OCA, a game-theoretic coordination architecture for self-interested agents within heterogeneous organizations. By integrating local dispute triggers, incremental updates, and retrospective resource-slashing incentives, OCA suppresses redundant negotiations to reduce coordination overhead.

OCA makes a deliberate trade-off. It trades the universal linearizability of traditional replicated state machines for bounded-error, incentive-aligned convergence. Compared to rigid consensus baselines, OCA proves that treating agents as rational, self-interested entities and penalizing them through delayed structural feedback offers a powerful strategy to eliminate bureaucratic bottlenecks and mitigate system welfare loss in decentralized governance.

\section*{Acknowledgment}
The authors thank the distributed systems and computational social science communities for foundational work on consensus, mechanism design, and strategic agent modeling.

\end{document}